\documentclass[reqno,keywordsasfootnote,12pt]{article}

\usepackage[english]{babel}
\usepackage[latin1]{inputenc}
\usepackage{amsmath,amsthm,amsfonts,amssymb,times}
\numberwithin{equation}{section}
\usepackage[final]{epsfig}
\usepackage{color}
\usepackage{cancel}

\usepackage{dsfont}

\def\1{ \mathds{1} }



\usepackage{mathrsfs}

\usepackage{enumerate}

\pagestyle{myheadings}

\makeatletter
\def\author#1{\gdef\autrun{\def\and{\unskip, }#1}\gdef\@author{#1}}

\makeatother


\newtheorem{theorem}{Theorem}[section]

\newtheorem{lemma}[theorem]{Lemma}
\newtheorem{proposition}[theorem]{Proposition}



\theoremstyle{definition}
\newtheorem{definition}[theorem]{Definition}




\numberwithin{equation}{section}

%
%
%

\usepackage{graphicx,pgf}

\makeatletter

\DeclareMathOperator{\sgn}{sgn}

\newcommand{\C}{\mathbb{C}}
 
\newcommand{\G}{\mathcal{G}}


\def\be{\begin{equation}}
\def\ee{\end{equation}} 

\newcommand*{\Chi}{\mbox{\Large$\chi$}}
\DeclareMathOperator{\Pf}{Pf}  

\newcommand{\m}{{\bf m}}
 


\makeatletter
\newcommand\xleftrightarrow[2][]{%
  \ext@arrow 9999{\longleftrightarrowfill@}{#1}{#2}}
\newcommand\longleftrightarrowfill@{%
  \arrowfill@\leftarrow\relbar\rightarrow}
\makeatother


\newcommand*{\connect}[2]{#1\leftrightarrow #2}
\newcommand*{\Gconnect}[2]{#1{\xleftrightarrow{\omega^{(2)}}} #2}


\usepackage{hyperref}
\usepackage{tikz}

\makeatother


\begin{document}





\title{\LARGE Pfaffian Correlation Functions  \\
of Planar Dimer Covers } 
 

\author{Michael Aizenman \footnote{Departments of Mathematics and Physics, Princeton University, USA.}
\and  Manuel La\'inz Valc\'azar \footnote{Princeton visiting student from Univ.\ Cantabria, Spain. }   
\and Simone Warzel \footnote{Zentrum Mathematik, TU M\"unchen, Germany.} }

\date{\small Version of \today \\[4ex]
}

\maketitle



\begin{abstract}

The  Pfaffian structure of the boundary monomer correlation functions in the dimer-covering planar graph models is rederived through a combinatorial / topological argument.   
These  functions are then extended into a larger family of order-disorder correlation functions which are shown to exhibit Pfaffian structure throughout the bulk.  Key tools involve combinatorial switching symmetries which are identified through the loop-gas representation of  the double dimer model, and topological implications of planarity.    


\end{abstract}




\section{Introduction}

%
 
The combinatorial problem of enumeration of dimer covers  of graphs (aka domino tilings)  has attracted interest from a diverse range of perspectives.  These include  statistical mechanics, combinatorics, and algorithm complexity studies. 
In their groundbreaking papers, P.~W.~Kasteleyn,  M.~E.~Fisher and H. V. N.~Temperley~\cite{Kas61,Fisher_Dimer61,TF61,Kas63}, 
showed that for planar  graphs the pure dimer problem admits a simple solution in terms of a Pfaffian of what is now known as the Kasteleyn matrix.  
The pure dimer partition functions is different in this sense from its monomer-dimer extension, for which its evaluation is computationally hard and thus not of simple Pfaffian form \cite{Jer87}.

Extensive research has been devoted  to various facets of dimer coverings,
specially in the case of planar and bipartite graphs. Examples include the close relation between the partition functions of the dimer cover and of the Ising model~\cite{Kas63,Fisher_Ising,PY80}, non-existence 
of phase transitions~\cite{HeilLieb}, structure of the model's correlation functions,   the arctic circle phenomenon~\cite{CEP96},  
continuum limits and their description in terms of (conformal) field theory.  More on this may be found in the overviews~\cite{KenyonPC,Dik,All15} and references therein.\\


Our main aim here is to present a simple geometric explanation of the Pfaffian nature of some of the model's correlation function, through which the model's $2n$ correlation functions can be determined from just the corresponding two point function.   
The  proofs given here bear close similarity to  the methods which have recently been developed  for planar Ising spin models~\cite{ADTW}.  In analogy to the latter, the method relies on a combinatorial relation, which is valid for general graphs, combined with topological properties of planar graphs.       

It was already noted that for planar graphs the boundary  monomer correlation functions, whose explicit definition is restated below,  are given by Pfaffians of the corresponding $2$-point functions~\cite{PRue08,GJL16}.   The relation is less simple for the bulk monomer correlation functions, but it was pointed out that these can be written as products of two Pfaffians~\cite{AllF14}.    \\

We start by giving an elementary geometric proof of the Pfaffian structure of the boundary monomer functions.  The derivation also  explains why these functions do not have the Pfaffian structure in the bulk.  Furthermore, we formulate more explicitly than was done in the literature the model's disorder operators, and show that the expectation values of products of order-disorder operators yield correlation functions which are simultaneously Pfaffian throughout the bulk and reduce to the simper monomer correlation functions for sites along a boundary line.  


The disorder operators can be viewed as incomplete implementations of the dimer model's $Z_2$ gauge symmetry.  
From this perspective, their construction and basic properties are similar to those of the corresponding concept for the Ising model, as discussed by L.~P.~Kadanoff and  H.~Ceva~\cite{Kad_Ceva}.   \\

The  combinatorial and topological arguments presented here parallel the analogous discussion of planar Ising model  in the introductory sections of  \cite{ADTW}. 
An essential  tool is a  path integral representation of a duplicated system, which is referred to as   
 the  double dimer model.  The latter has been studied by R.~Kenyon and D.~Wilson (cf.\ \cite{KenyonWilson,Kenyon14} and references therein) and is related to the monopole-dimer model recently studied in~\cite{Ay15}.  \\ 


 %
%
\section{Dimer covers and monomer correlations}  

Given a finite graph $\G = ( \mathcal{V} , \mathcal{E}) $ of vertex set $ \mathcal{V} $, a perfect matching or dimer cover is a 
subset of the edge set, $\omega \subset \mathcal E$,  
such that every vertex is covered by exactly one edge.  
The set of perfect
matchings is denoted $\Omega_ \G $.  
The dimer-cover partition function counts the number of
the graph's perfect matchings.   

Perfect matchings can also be weighted through a complex-valued edge function 
$K \,: \mathcal E \mapsto \C$.      
Given such an edge weight, 
the weighted dimer-cover partition function is 
\be 
Z_{ \G, K}  \ := \  \sum_{\omega\in \Omega_\G} \chi_K(\omega) 
\ee 
with 
$$
\chi_K(\omega)  := \prod_{b\in \omega} K_b \, . 
$$

Of particular interest is the effect on the dimer-cover partition function of 
the removal of a collection of sites, $ M \subset  \mathcal{V}$, which are regarded 
as covered by separate monomers.   The collection of perfect matchings of the remaining vertices is
denoted by $\Omega_\G(M) $ and 
\be 
Z_{\G,K}(M) \ := \   \sum_{\omega\in \Omega_\G(M)} \chi(\omega) 
\ee
stands for the weighted partition function of the monomer-depleted graph. 
It should be noted that  not all graphs admit  a perfect matching.  
In particular  if $M$ is of odd cardinality, then at least one of the factors in $ Z_{\G,K}\times Z_{\G,K}(M) $ vanishes. For simplicity, we shall concentrate in this paper on the case $ Z_{\G,K} \neq 0 $ 
for which the \emph{monomer correlation function} for an even collection of disjoint sites 
$\{ x_1,..., x_{2n}\} \subset \mathcal{V} $  is well-defined as 
\be \label{M_ratio} 
S_{2n}(x_1,..., x_{2n}) := \langle \prod_{j=1}^n \eta_{x_j} \rangle_{\G,K} \ := \  \frac{Z_{\G,K}(\{ x_1,..., x_{2n}\} )} {Z_{\G, K}}
\ee 
The variables $ \eta_{x_j} $ should be thought of an operator in the functional integral representing the average $ \langle \cdot \rangle_{\G,K} $ corresponding to the dimer partition function~$ Z_{ \G, K}  $.
These variables take a similar role as the spin variables in the related Ising model. 

In the planar set-up, monomer correlations have been studied early on by M.~E. Fisher and J.~Stephenson~\cite{Fisher_Stephen}, who determined the fall-off of $ S_2(x_1,x_2) $ 
on the square lattice $ \mathbb{Z}^2 $ for $ K \equiv 1 $ and two monomers in the bulk to behave asymptotically as $ |x_1-x_2|^{-1/2} $ for large separation (making the similarity to the Ising model even more apparent~\cite{PY80}). The values of other special placements of momomer pairs 
on a square lattice are also known (cf.~\cite{Fisher_Stephen,Hartwig,All15}). In case of the infinite half-lattice $ \mathbb{Z} \times \mathbb{Z}_+ $, the monomer boundary correlations in case $ K \equiv 1 $ have been computed not long ago by V.~B. Priezzhev and P.~Ruelle~\cite{PRue08}. They turned out to be Pfaffians with two-point function given by 
\be\label{Ruelle08}
S_2((\xi,0),(\eta,0)) = \begin{cases} - \frac{2}{\pi\, |\xi-\eta|} & \mbox{if $|\xi-\eta| $ is odd} \\ 0 & \mbox{otherwise.}  \end{cases} 
\ee


\section{The double dimer model and its loop gas representation} 
\label{Sec:DD}

The removal of a site in a finite graph, or equivalently its cover by a monomer, has a drastic effect on the graph's dimer covers:  if $Z_{\Lambda,K}\neq 0$ then for parity reasons the modified graph has no dimer cover. 
The removal of an even number of sites does not automatically invalidate the existence of a cover.   
Its effect on the distribution of the dimer covers may be localized to a collection of random paths linking pairwise the affected sites.    
A convenient way 
to arrive at such a stochastic geometric picture of correlations 
is to consider the overlay of two sets of dimer covers, one of the original graph and the other of its depleted version resulting in the double dimer model. 
This technique is reminiscent of the duplication which is an effective tool in the study of the Ising model's correlation functions in its random current representation~\cite{GHS,Aiz80}.

\tikzset{
vertexA/.style = {
shape=circle,
draw = black,
fill=white,
very thick
},
vertexB/.style = {
shape=circle,
draw = black,
fill=white,
very thick,
dashed
},
edgeA/.style = {
draw=black,
line width=1.6mm
},
edgeB/.style = {
dashed,
draw=black,
line width=1.6mm
}
}

\begin{figure}[h]
\begin{center}
\begin{tikzpicture}[scale=1.1]
\draw[step=1,gray,very thin] (-2,0) grid (6,3);
\tikzset{every node/.style={vertexA}}
\node (x1) at (0,3) {{\small$x_1$}};
\node (x2) at (0,0) {{\small$x_2$}};
\node (x5) at (4,3) {{\small$x_3$}};
\node (x6) at (4,2) {{\small$x_4$}};

\tikzset{every node/.style={vertexB}}
\node (x3) at (0,1) {{\small$x_5$}};
\node (x4) at (5,1) {{\small$x_6$}};

\tikzset{every edge/.style={edgeA}}
    \draw(1,3) edge (2,3);
    \draw(3,3) edge (3,2);
    \draw(2,2) edge (2,1);
    \draw(1,2) edge (1,1);
    \draw(0,2) edge (x3);
    \draw(1,0) edge (2,0);
    \draw(3,0) edge (4,0);
    \draw(5,0) edge (x4);
    \draw(5,2) edge (6,2);
    \draw(5,3) edge (6,3);
        \draw(-1,0) edge (-1,1);
           \draw(-2,0) edge (-2,1);
    \draw(3,1) edge[bend left] (4,1);
 \draw(-2,3) edge[bend left] (-2,2);
    \draw(-1,3) edge[bend left] (-1,2);
      \draw(6,0) edge[bend left] (6,1);
\tikzset{every edge/.style={edgeB}}
   \draw(x1) edge (1,3);
   \draw(2,3) edge (3,3);
   \draw(3,2) edge (2,2);
   \draw(1,1) edge (2,1);
   \draw(0,2) edge (1,2);

   \draw(x2) edge (1,0);
   \draw(2,0) edge (3,0);
   \draw(4,0) edge (5,0);
   \draw(x5) edge (x6);
   \draw(5,2) edge (5,3);
      \draw(6,2) edge (6,3);
            \draw(-2,0) edge (-1,0);
                 \draw(-2,1) edge (-1,1);

   \draw(3,1) edge[bend right] (4,1);
    \draw(-2,3) edge[bend right] (-2,2);
    \draw(-1,3) edge[bend right] (-1,2);
    \draw(6,0) edge[bend right] (6,1);
\end{tikzpicture}
\caption{A double dimer configuration $\omega^{(2)}= (\omega_1,\omega_2)$  on a finite graph (whose edges are  indicated  in grey) with disjoint sets of monomers covering selected sites in  one or the other copy of the graph.   
The edges of $ \omega_1$ and $ \omega_2$, are marked in solid and dashed lines, correspondingly.
%
 The overlay 
results in a configuration of alternatingly marked paths connecting the monomers and  alternatingly marked loops, as described in Lemma~\ref{lem:loops}. } \label{Fig:Loops}
\end{center}
\end{figure}
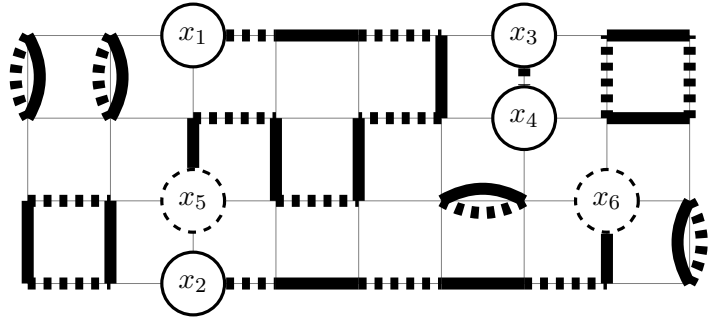

The configurations of doubled dimer covers of a  graph 
$ \G = ( \mathcal{V}, \mathcal{E}) $, depleted by corresponding sets of monomers $M_1, M_2 \subset \mathcal{V}$ will be denoted here as: 
\be 
\omega^{(2)} = (\omega_1, \omega_2) \, \in \,  
\Omega_\G ( M_1) \times \Omega_\G (M_2) \ =:\ \Omega_\G^{(2)} ( M_1, M_2)  \, .  
\ee   
Clearly, each such configuration $\omega^{(2)}$ is in one-to-one correspondence 
with a 2-multigraph with vertex set $\mathcal{V}$ and the collection of edges in $\omega^{(2)}$.  
The following deterministic statement concerning such 
pairs of matchings  relates the duplicated dimer cover model with 
a system 
of loops and paths with prescribed boundaries given by the monomers, cf.~Figure~\ref{Fig:Loops}.

\begin{lemma}[Double matching  as a loop / path system]  \label{lem:loops} 
For any finite graph $\G = ( \mathcal{V}, \mathcal{E})$, let  $\omega^{(2)}  \in \, \Omega_\G^{(2)} ( M_1, M_2)$  be  a  pair of dimer covers of  $\G$ depleted by a disjoint pair of monomers,  $M_1, M_2 \subset \G$.  Then the multiplicity with which the edges are covered by $\omega^{(2)}$ coincides with that of a collection $\Gamma  =\Gamma(\omega^{(2)})$ of edge-disjoint loops and paths 
where each $\gamma\in \Gamma$ is either
\begin{enumerate}[i.]
\item  a double loop covering  a single edge,
\item   a simple loop  of an even number of non-repeated edges,
\item   a simple  path with  boundary set $\partial \gamma \subset  M_1 \sqcup  M_2$.  
\end{enumerate}  
In case~iii., the numbers  of edges of $\gamma$  is odd if and only if its two boundary sites  are in the same monomer set (i.e.\ either both in $M_1$ or both in $M_2$). \\

The loop-path characterisation of double covers $ \omega^{(2)} $ in terms of $\Gamma$ partitions  their collection $\Omega_\G^{(2)} ( M_1, M_2) $  into equivalence classes, each of $2^{n_s(\Gamma)}$ elements, where  $n_s(\Gamma)$ is the number of simple loops in $\Gamma$.  
\end{lemma}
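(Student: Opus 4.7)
The proof plan is to analyze $\omega^{(2)}$ vertex-by-vertex via degree counts in the multigraph whose edge multiset is $\omega_1 \cup \omega_2$. Since $\omega_i$ is a perfect matching of $\mathcal{V} \setminus M_i$, every vertex $v \notin M_1 \cup M_2$ is incident to exactly one edge from each of $\omega_1, \omega_2$, hence has degree $2$; the disjointness $M_1 \cap M_2 = \emptyset$ ensures that every $v \in M_1 \sqcup M_2$ is incident to exactly one edge from one of the two covers, hence has degree $1$. No vertex has higher degree.

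I would then invoke the elementary fact that any multigraph in which no vertex has degree greater than $2$ decomposes uniquely, as an edge multiset, into a disjoint family of components each of which is either a cycle of length $\ge 2$ (a length-$2$ cycle being a doubled edge) or a simple path between two degree-$1$ vertices. In the present setup the doubled edges account for case (i), the cycles of length $\ge 3$ for case (ii), and the paths (whose endpoints lie in $M_1 \sqcup M_2$) for case (iii). By the degree count, every non-doubled component alternates between edges of $\omega_1$ and of $\omega_2$: each internal vertex contributes exactly one of each. This forces the number of edges in a simple loop (case ii) to be even, since the two labels appear equally often around the cycle.

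The parity claim for paths follows from the same alternation. A path meeting $v \in M_1$ has its boundary edge in $\omega_2$, since $v$ is uncovered in $\omega_1$; propagating the alternation along the path, the other boundary edge is again in $\omega_2$ iff the path length is odd, which forces the other endpoint to also lie in $M_1$. The symmetric argument with the roles of $M_1$ and $M_2$ swapped yields the stated parity.

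For the equivalence-class count, I would reconstruct $(\omega_1,\omega_2)$ from a given $\Gamma$ by labeling each edge by $\{1,2\}$. Doubled edges force both labels. Paths are rigid because the label of each boundary edge is dictated by the monomer set containing its endpoint, and the alternation then determines every internal edge; the parity statement guarantees consistency at the far endpoint. Only the simple loops admit genuine ambiguity: their alternating $\{1,2\}$-labeling has exactly two solutions, obtained by exchanging the two classes. Hence the preimage of $\Gamma$ in $\Omega_\G^{(2)}(M_1,M_2)$ has cardinality $2^{n_s(\Gamma)}$. The main subtlety is cosmetic rather than substantive: one must carefully separate doubled edges (case i) from the general multigraph notion of a cycle, since the two would otherwise be indistinguishable as length-$2$ cycles.
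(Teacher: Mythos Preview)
Your argument is correct and follows essentially the same route as the paper's proof: both compute the vertex degrees in the multigraph $\omega_1 \cup \omega_2$, invoke the decomposition of a graph with maximum degree $2$ into cycles and paths, deduce the $\omega_1/\omega_2$ alternation along each component, and use this alternation to settle both the parity claims and the $2^{n_s(\Gamma)}$ count. Your treatment is somewhat more explicit (particularly the reconstruction argument for the equivalence classes and the remark separating doubled edges from longer cycles), but there is no substantive difference in strategy.
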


\begin{proof}
 In the case of disjoint monomer sets, the degree of each site $x\in \mathcal{V}$  
in  the multigraph formed from the edge set of $\omega^{(2)}$  is either $1$ or $2$, and given by 
    \be 
        \deg_{\omega^{(2)}}(x) = \deg_{\omega_1}(x) + \deg_{\omega_2}(x) =
        \begin{cases}
            2, &\text{ if }x \in  \mathcal{V} \backslash  [ M_1 \sqcup M_2] \\
            1, &\text{ if }x \in M_1 \sqcup M_2
        \end{cases}.
    \ee
It follows that the collection of edges with multiplicity $1$ is the disjoint union of loops (of no boundary) and paths with end points in $M_1 \sqcup  M_2$, each made of simple edges  in $\omega_j$, at alternating values of $j=1,2$.   The stated constraints on the parity of the number of edges in the loops and paths readily follow from the constraint that the path's edges alternate between the two dimer covers.  In case of the open paths, the identity of the cover to which an edge of $\gamma$ belongs  can be determined successively starting from the end points.   There is no such  constraint for the $n_s(\Gamma)$ simple closed loops, and hence for each of these there are exactly two choices (independent among the loops) for the alternating values of $j\in \{1,2\}$.  
\end{proof}

The above representation of $\Omega_\G^{(2)} ( M_1, M_2) $ in terms of loops and paths may be extended by allowing the two sets of monomers to  overlap, or coincide.   The corresponding pure loop gas  was recently studied in~\cite{Ay15}. \\  
  
The loop gas picture of the double-dimer partition functions  
\be\label{eq:DDPF}
 Z_{\G,K}^{(2)}\left(M_1,M_2\right) \ := \ Z_{\G,K}(M_1)  \, \,  Z_{\G,K}(M_2)  \ = \sum_{\omega^{(2)} \in \Omega^{(2)}(M_1, M_2)} \chi_K(\omega_1) \, \chi_K(\omega_2)  \, , 
\ee
  is particularly convenient in revealing switching symmetries of  the double dimer model's connection amplitudes.  Similar symmetries have been noted for the correlation functions of the Ising model, revealed  there through its random current representation.  

The connection amplitudes are defined as restricted sums such as
\begin{multline}\label{eq:defP}
Z_{\G,K}^{(2)}\left(M_1,M_2;  \connect{x_j}{y_j} \; \mbox{for $ j =1, \dots , N$} \right) : = \\
 \sum_{\omega^{(2)} \in \Omega^{(2)}(M_1, M_2)} \chi_K(\omega_1) \, \chi_K(\omega_2)  \, \prod_{j =1}^N \1\left[  \Gconnect{x_j}{y_j} \right] 
 \, .
\end{multline}
where   $ \{x_j ,  y_j\}_{j=1,..., N} $ are pairs of sites in $M_1\sqcup M_2 $, and $\1\big[  \Gconnect{x_j}{y_j} \big] $ is an indicator function corresponding to the condition that 
 the monomers $ x_j , y_j $ are connected by a path $\gamma \in \Gamma(\omega^{(2)})$. 
\begin{lemma}[Switching principle I]\label{lem:switch}
For any finite graph $\G= (\mathcal{V} , \mathcal{E}) $, pair of disjoint monomer sets  $M_1, M_2$ and  
$ \{x,y\} \subset \mathcal{V}\backslash (M_1 \sqcup M_2)$:
\begin{align}
& Z_{\G,K}^{(2)}\left(M_1 \sqcup \{x,y\}  ,M_2;  \connect{x}{y} , C \right)  = Z_{\G,K}^{(2)}\left(M_1, M_2  \sqcup \{x,y\} ;  \connect{x}{y}  , C\right) \, ,   \\
& Z_{\G,K}^{(2)}\left(M_1 \sqcup \{x\}  ,M_2 \sqcup \{y\} ;  \connect{x}{y} , C \right)  =Z_{\G,K}^{(2)}\left(M_1 \sqcup \{y\}  ,M_2 \sqcup \{x\} ;  \connect{x}{y}  ,C\right)  
\end{align}
where $ C $ stands for any collection of other connection conditions among monomers in $ M_1\sqcup M_2 $. 
\end{lemma}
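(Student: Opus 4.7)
The plan is to prove both identities by constructing explicit weight-preserving involutions on double-dimer configurations, implemented by swapping the two cover labels along the unique path in the loop/path decomposition that realizes the connection $x \leftrightarrow y$. Fix a configuration $\omega^{(2)} = (\omega_1,\omega_2)$ contributing to the LHS of the first identity. Since $\{x,y\} \cap (M_1 \sqcup M_2) = \emptyset$ and both $x,y$ are monomers of copy~1, Lemma~\ref{lem:loops} applied to the decomposition $\Gamma(\omega^{(2)})$ yields a unique path $\gamma \in \Gamma(\omega^{(2)})$ with $\partial \gamma = \{x,y\}$, and its parity clause guarantees $\gamma$ has an odd number of edges; because $x$ and $y$ are uncovered in $\omega_1$, the successive edges of $\gamma$ alternate $\omega_2, \omega_1, \omega_2, \ldots, \omega_1, \omega_2$.

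Define $\sigma_\gamma(\omega_1,\omega_2) = (\omega_1', \omega_2')$ by interchanging the cover labels on every edge of $\gamma$ and leaving all other edges unchanged. I will check three properties. First, $\sigma_\gamma$ maps $\Omega^{(2)}(M_1 \sqcup \{x,y\}, M_2)$ bijectively onto $\Omega^{(2)}(M_1, M_2 \sqcup \{x,y\})$: at every interior vertex of $\gamma$ one $\omega_1$-edge and one $\omega_2$-edge are exchanged, so the degree in each cover is preserved, while at $x$ (and at $y$) the unique incident $\omega_2$-edge becomes an $\omega_1$-edge, turning $x$ from a copy-1 into a copy-2 monomer. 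Second, the weight $\chi_K(\omega_1)\chi_K(\omega_2) = \prod_b K_b^{m(b)}$ depends only on the edge multiplicities $m(b)$, which are untouched by the relabeling. Third, any other connection required by $C$ is witnessed by a path in $\Gamma(\omega^{(2)})$ disjoint from $\gamma$ (edge-disjointness of the loop/path system is part of Lemma~\ref{lem:loops}), so $\sigma_\gamma$ preserves those connections, and the edge set of $\gamma$ itself still connects $x$ to $y$ in the image configuration. Since $\sigma_\gamma$ is manifestly an involution (the swapped path in the image is the same edge set), this establishes the first identity.

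The second identity follows by the same construction, with only the parity bookkeeping differing: now $x \in M_1 \sqcup \{x\}$ and $y \in M_2 \sqcup \{y\}$ lie in opposite monomer sets, so Lemma~\ref{lem:loops} produces a connecting path $\gamma$ of \emph{even} length whose labels alternate $\omega_2, \omega_1, \ldots, \omega_2, \omega_1$, so the first and last edges belong to opposite copies. Swapping labels along $\gamma$ then turns $x$ into a monomer of copy~2 and $y$ into a monomer of copy~1, producing the desired bijection between the two sides. The only substantive ingredient is Lemma~\ref{lem:loops}; once the unique connecting path and its endpoint parities have been extracted, weight preservation and involutivity are immediate, and the main technical point is simply to track carefully which cover each endpoint sits in before and after the swap. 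I do not anticipate serious obstacles beyond this bookkeeping.
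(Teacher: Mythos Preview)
Your proposal is correct and follows essentially the same approach as the paper: the paper's bijection $(\omega_1,\omega_2)\mapsto(\omega_1\triangle\gamma^{(x,y)},\omega_2\triangle\gamma^{(x,y)})$ is exactly your label-swap along the connecting path, and the weight preservation and handling of the extra conditions $C$ are argued in the same way. Your write-up is in fact more explicit than the paper's (which leaves the verification of the bijection and the preservation of $C$ to the reader), so nothing needs to be added.
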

\begin{proof}  Considering first the case $C = \emptyset$ (i.e. no other conditions),   
let $ \Omega^{(2)}(M_1 \sqcup \{x,y\}  , M_2;\connect{x}{y}) $ be the set of double dimer covers for which there is a path $ \gamma^{(x,y)} \in \Gamma $ with $\partial \gamma^{(x,y)} = \{ x, y\} $. 
The first assertion is based on the  bijection
\be 
 \Omega^{(2)}(M_1 \sqcup \{x,y\}  , M_2;\connect{x}{y}) \   \to \  \Omega^{(2)}(M_1  , M_2 \sqcup \{x,y\} ;\connect{x}{y}) \, \notag 
 \ee 
implemented by the symmetric difference $ \triangle $ of sets: 
\be 
\left(\omega_1,\omega_2\right) \  \mapsto \ \left( \omega_1 \triangle \gamma^{(x,y)}, \omega_2 \triangle \gamma^{(x,y)} \right)  \, .
\ee  
This map reverses the ``edge coloring'' along the  path $  \gamma^{(x,y)}  $ connecting $ x $ and $ y $ with the color indicating to which of the two dimer covers the edge belongs. 
The first identity thus follows immediately from the fact that the path weights are unchanged under a color-flip operation.

The same switching argument implies also the second identity, and the generalization to more general condition $C$. 
\end{proof}

The loop gas formulation of the double dimer model   casts its correlation functions in terms of (discrete) path integrals, thereby bringing  it closer to a broad range of physics models.  
A more explicit version of this representation, which could  be used for an alternative presentation of the analysis which follows,  is stated in Appendix~\ref{app:A}.

\section{Pfaffian structure of boundary monomer correlation functions}

The switching principle allows a simple proof of the fact that 
boundary monomer correlation functions have a Pfaffian nature on all planar graphs. 
The corresponding result for Ising model's boundary spin-spin correlation functions goes back to~\cite{GBK78}.  Our proof parallels the more   recent rederivation of that relation in~\cite{ADTW}.  

For the dimer model the  following statement was derived in~\cite{PRue08} in case of the infinite planar half-lattice for which the two-point function is given by~\eqref{Ruelle08}. 
For other planar graphs, the theorem was recently established by different means in~\cite{GJL16}.

\begin{theorem}[Pfaffian boundary correlations]  \label{thm:boundary}
 For any finite planar  graph $\G= (\mathcal{V} , \mathcal{E}) $ the boundary values of the 
monomer correlation functions satisfy
\begin{align}  
S_{2n}(x_1, ..., x_{2n}) & =  
\sum_{\pi \in \Pi_{2n} } {\rm sgn}(\pi)  \, \prod_{j=1}^{n} 
S_{2}(x_{\pi(2j-1)}, x_{\pi(2j)})  \; \equiv  \Pf_n\left( S_{2}(x_i,x_j) \right)
\end{align}  
where   $M:=\{x_1, ..., x_{2n} \}$   ranges over 
sequences of disjoint vertices  
positioned in a cyclic order along any boundary of $\G$. Moreover, $\Pi_{2n}$ is the collection of pairings of $\{1,...,2n\}$,  and ${\rm sgn}(\pi) $ is the pairing's parity.
\end{theorem}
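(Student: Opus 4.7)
The plan is to prove the theorem by induction on $n$, using the row expansion of a Pfaffian to reduce it to a linear recursion for $S_{2n}$ and then establishing that recursion through the loop-gas decomposition of Lemma~\ref{lem:loops} together with the switching identities of Lemma~\ref{lem:switch}. Since
\be
\Pf_n(A) \ = \ \sum_{k=2}^{2n} (-1)^k A_{1,k}\,\Pf_{n-1}\bigl(\widehat A^{(1,k)}\bigr),
\ee
with the base case $n=1$ being tautological, it suffices to establish
\be\label{eq:Srec}
S_{2n}(x_1,\dots,x_{2n}) \ = \ \sum_{k=2}^{2n} (-1)^k\, S_2(x_1,x_k)\, S_{2n-2}(x_2,\dots,\widehat{x_k},\dots,x_{2n}),
\ee
since the removal of $x_1$ and $x_k$ preserves the cyclic boundary ordering of the remaining sites, so that the induction hypothesis produces the reduced Pfaffian. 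After multiplication by $Z_{\G,K}^{\,2}$, the recursion~\eqref{eq:Srec} is equivalent to the combinatorial identity
\be\label{eq:Zrec}
Z^{(2)}_{\G,K}(M,\emptyset) \ = \ \sum_{k=2}^{2n} (-1)^k\, Z^{(2)}_{\G,K}\bigl(\{x_1,x_k\},\,M\setminus\{x_1,x_k\}\bigr), \qquad M := \{x_1,\dots,x_{2n}\}.
\ee

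The heart of the plan is to prove~\eqref{eq:Zrec} by expanding every $Z^{(2)}$ through Lemma~\ref{lem:loops}: each configuration $\omega^{(2)}$ induces a pairing $\tilde\pi$ of $M$ via the paths in $\Gamma(\omega^{(2)})$. Because every vertex has degree at most $2$ in $\omega^{(2)}$, these paths are vertex-disjoint, and hence simple arcs in the plane; combined with planarity and the cyclic boundary ordering of the $x_i$, this forces $\tilde\pi$ to be a non-crossing pairing. Interchanging the sums over $k$ and over $\tilde\pi$, the identity~\eqref{eq:Zrec} is thus reduced to the following termwise statement: for every non-crossing pairing $\tilde\pi$ whose pair containing $x_1$ is $(x_1,x_j)$,
\be\label{eq:term}
\sum_{k=2}^{2n} (-1)^k\, Z^{(2)}\bigl(\{x_1,x_k\},\,M\setminus\{x_1,x_k\};\,\tilde\pi\bigr) \ = \ Z^{(2)}(M,\emptyset;\tilde\pi).
\ee
The $k=j$ contribution equals $(-1)^j Z^{(2)}(M,\emptyset;\tilde\pi)$ by iterated application of Switching Principle~I, which moves each $\tilde\pi$-pair of $M_2$ other than $(x_1,x_j)$ back into $M_1$. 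For $k\neq j$, let $x_l$ denote the $\tilde\pi$-partner of $x_k$; the pair $(x_k,x_l)$ then straddles the two monomer sets, and Switching Principle~II provides
\be
Z^{(2)}\bigl(\{x_1,x_k\},M\setminus\{x_1,x_k\};\tilde\pi\bigr) \ = \ Z^{(2)}\bigl(\{x_1,x_l\},M\setminus\{x_1,x_l\};\tilde\pi\bigr).
\ee
The decisive combinatorial observation is that in every non-crossing pairing of the cyclically ordered set $\{1,\dots,2n\}$ each pair consists of indices of opposite parity, since the arc strictly between the paired indices must itself carry a non-crossing pairing and thus has even cardinality. Consequently $(-1)^k+(-1)^l=0$, the off-diagonal contributions cancel in pairs $(k,l)$, and as $j$ is even only the surviving term $(-1)^j Z^{(2)}(M,\emptyset;\tilde\pi) = Z^{(2)}(M,\emptyset;\tilde\pi)$ remains, establishing~\eqref{eq:term}. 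Summing over $\tilde\pi$ then yields~\eqref{eq:Zrec}, closing the induction.

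The critical step --- and the only place where the planarity of $\G$ enters --- is the non-crossing reduction of $\tilde\pi$; without it the opposite-parity pairing of indices would fail, and with it the sign cancellation that drives~\eqref{eq:term}. This is also what one expects to be the obstruction to extending the pure Pfaffian form into the bulk, and it motivates the order-disorder correlators studied in the subsequent sections. Once the topological input is in hand, the remainder of the argument is a purely algebraic manipulation governed by the two switching identities of Lemma~\ref{lem:switch}.
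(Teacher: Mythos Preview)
Your argument is correct and follows essentially the same route as the paper: reduce to the Pfaffian row-expansion recursion, expand the right-hand side via the double-dimer loop gas, use planarity to force opposite-parity pairing of connected boundary monomers, and then let the switching identities of Lemma~\ref{lem:switch} produce the $k=j$ main term and the pairwise cancellation of the rest. The only organizational difference is that you condition on the full induced pairing $\tilde\pi$ and iterate the switching $n-1$ times to reach $Z^{(2)}(M,\emptyset;\tilde\pi)$, whereas the paper tracks only the partners of $x_1$ and $x_k$ and applies a single switch of $\{x_1,x_k\}$ to reach $Z^{(2)}(\emptyset,M;\connect{x_1}{x_k})$; both are equivalent since $Z^{(2)}(M,\emptyset)=Z^{(2)}(\emptyset,M)$. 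One terminological caution: what you call ``Switching Principle~II'' is in fact the \emph{second identity} of Lemma~\ref{lem:switch} (Switching principle~I); in the paper ``Switching principle~II'' is reserved for Lemma~\ref{pr:switch2}, which concerns the order--disorder setup.
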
 

\begin{proof}    Through a  known characterization of Pfaffians (provable by an induction argument) it suffices to show that for each $n >1$ and any cyclicly ordered sequence of boundary sites 
\be \label{SQ}
S_{2n}(x_1, ..., x_{2n}) \ = \ Q_{2n}(x_1, ..., x_{2n}) \,  
\ee 
with $Q_{2n}$ defined as: 
\be\label{def:Q}
Q_{2n}(x_1, ..., x_{2n}) \ := \ \sum_{k=2}^{2n} (-1)^k \, S_{2}(x_1,x_k)\,  S_{2(n-1)} \, .  (\cancel{x_1},x_2, ..., \cancel{x_k}, ..., x_{2n})
\ee 
At fixed  $k$ the term $S_{2}(x_1,x_k)\,  S_{2(n-1)} \,  (\cancel{x_1},x_2, ..., \cancel{x_k}, ..., x_{2n}) $ is a   sum of over configurations of the duplicated system,  $\omega^{(2)} \in \Omega^{(2)}(\{x_1,x_k\}, \{\cancel{x_1},x_2, ..., \cancel{x_k}, ..., x_{2n} \})$, which may be grouped 
according to the paths of $\Gamma(\omega^{(2)})$ which connect to $x_1$ and $x_k$.   These fall into two classes: the  monomers $x_1$ and $x_k$ may be connected to each other by some $\gamma \in \Gamma$, or else each is connected to another  monomer:
\begin{align}  \label{eq:Qexpand}
& Q_{2n}(x_1, ..., x_{2n}) \left( Z_{\G,K}\right)^2   \ =  \\
&   \ \sum_{k=2}^{2n} (-1)^k  \, Z_{\G,K}^{(2)}(\{ x_1,x_k\}  , \{ \cancel{x_1},x_2, ..., \cancel{x_k}, ..., x_{2n} \}  ;  \connect{x_1}{x_k} ) \notag \\
& \quad  +   \sum_{k=2}^{2n}  (-1)^k  \, \sum_{\substack{l,m=2 \\ k\neq l \neq m \neq k} }^{2n}   \, Z_{\G,K}^{(2)}\left(\{ x_1,x_k\}  , \{ \cancel{x_1},x_2, ..., \cancel{x_k}, ..., x_{2n} \}  ; { \connect{\ x_1}{x_m}\atop\connect{x_k}{x_l} } \right) \, . \notag 
\end{align}
 
Being based on  combinatorial arguments, the above relation holds for arbitrary graphs.   It will now be combined with the following  topological implication of planarity.   For any planar graph,  a pair of  monomers $ \{x_i,x_j\} $ located along  the boundary can be linked by one of the non-intersecting simple paths of $\Gamma(\omega^{(2)})$ only if the two are either consecutively placed along the boundary or separated by an even number of other monomers.  In other words, in the cases considered here: 
\be 
 \connect{x_i}{x_j}  \ \Longrightarrow \  (-1)^{i-j} = -1 \,. 
\ee

For the pair of sums on the  right side of~\eqref{eq:Qexpand} this implies:  
\begin{enumerate} [i.] 
\item  In the first sum $ (-1)^{k-1} = -1 $, and hence
 \begin{align}
 & \sum_{k=2}^{2n} (-1)^k  \, Z_{\G,K}^{(2)}(\{ x_1,x_k\}  , \{ \cancel{x_1},x_2, ..., \cancel{x_k}, ..., x_{2n} \}  ;  \connect{x_1}{x_k} ) = \notag \\
& \sum_{k=2}^{2n} (-1)^k \, Z_{\G,K}^{(2)}(\emptyset , M ;  \connect{x_1}{x_k} )  = \sum_{k=2}^{2n} Z_{\G,K}^{(2)}(\emptyset , M ;  \connect{x_1}{x_k} )  \notag \\  
&\qquad  = 
Z_{\G,K}(M) \, Z_{\G,K} = S_{2n}(x_1,\dots, x_{2n})\,  \left( Z_{\G,K}\right)^2   \, . 
\end{align}
Here the first step is a consequence of the switching principle of Lemma~\ref{lem:switch}. 

\item In the second sum $ (-1)^{k-l} = -1$, and thus 
\be 
\sum_{\substack{k,l=2 \\m\neq k \neq l \neq m} }^{2n}  (-1)^k  \, Z_{\G,K}^{(2)}\left(\{ x_1,x_k\}  , \{ \cancel{x_1},x_2, ..., \cancel{x_k}, ..., x_{2n} \}  ; { \connect{\ x_1}{x_m}\atop\connect{x_k}{x_l} } \right) \\
=  \ 0
\ee 
due to  the antisymmetry of the summands under  the exchange of $k$ with $l$ as is apparent from the switching principle of Lemma~\ref{lem:switch}.
\end{enumerate}  

Upon insertion in \eqref{eq:Qexpand} these relations  prove \eqref{SQ}, and through it the claimed Pfaffian structure. 
\end{proof} 

\section{Disorder operators for the dimer model} 


 In the context of  planar Ising spin systems order-disorder correlation functions  have a Pfaffian structure throughout the bulk  and reduce to  simple correlations functions in case of sites along the boundary.
They have been recently discussed, from a pair of somewhat different perspectives, in \cite{CCK16} and \cite{ADTW}. 
To present a related  concept for the dimer model's correlation functions we turn now to  the dimer analog of disorder operators.  

The definition of the disorder operators may be placed in  the broader context of gauge symmetries.  For that let us first recall  Kasteleyn's observation~\cite{Kas63} that the dimer model has the following  $Z_2$ gauge symmetry  in the dependence of the partition function $Z_{\G,K}$ on the kernel $K$. \\ 

For  subsets $B\subset \mathcal{V} $ let us denote  
\be 
\partial B := \{ [x,y] \in \mathcal{E} \, | \,  \mbox{if exactly one of the two points is in $B$} \} \, 
\ee  
which forms the   {\it edge boundary} of $B$. \\

Next, for any edge set $E \subset \mathcal{E} $  let  
$T_{E }: \mathbb{C}^{\mathcal{E} }\to  \mathbb{C}^{\mathcal{E} } $ be the transformation of $K$  which flips its  signs  over the edges in $E$, 
\be
(T_{E}K)_b \ = \  \begin{cases}  -  K_b  & \mbox{if} \; b  \in E \\ 
  ~~K_b  & \mbox{otherwise.}  
  \end{cases} 
\ee

The key observation now is that if $E = \partial B$ for a set $B\subset \mathcal{V} $  then 
\be \label{eq:Z2gauge}
Z_{\Lambda,T_{\partial B} K}  \ = \ (-1)^{|B|}\,  Z_{\Lambda,K}  \, ,
\ee 
where   $|B | $ is the number of sites in $B$. 
For  $B$ containing a single site  the relation \eqref{eq:Z2gauge} holds since in each dimer cover exactly one dimer is affected by the sign flip $T_{\partial B} $.  
The general case follows by noting the commutative product relation 
\be T_{\partial B} =  \prod_{x\in B} T_{\partial \{x\}} \   
\ee
and taking the corresponding product of  the single site case of \eqref{eq:Z2gauge}.\\ 

In view of the simplicity of the effect of $T_{\partial B}$ on the partition function (and also on the expectations defined below), such mappings may be regarded as the model's gauge transformations. \\   

The disorder operators  which are  defined next may be viewed as partial gauge  transformations,  given by $T_E$ where $E$ is the collection of edges which are traversed by a line $\ell $ which has only transversal intersections with the edges of $\mathcal {E}$ and in the general case has a non-empty boundary set 
$\partial \ell$.  The end-points of $\ell$ are associated with sites of the dual graph $ \G^* $, namely the faces of $ \G $ in which the end points of $\ell $ lie.    One may note that away from  $\partial \ell$  the transformation locally acts as if it could be associated with a gauge transformation -- but it is not (unless $\partial \ell = \emptyset$).

\begin{definition} \label{def:disorder}
For a planar graph $\G=(\mathcal{V},\mathcal{E})$ with edge weights $ K \,: \mathcal E \mapsto \C$:
\begin{enumerate}[i.]
\item The {\it disorder operators} $\tau_{\ell}$ 
are associated with site-avoiding, lines $ \ell_1, \dots,  \ell_n $ in the plane in which $\G$ is embedded.  To each  such line  
we associate the transformation 
$ K \mapsto T_{\ell^*} K$  where 
$\ell^*$ is the set of edges in $\mathcal {E}$ which are crossed  by  $\ell$ an odd number of times.

\item The  {\it expectation values} of products of such disorder operators is defined as: 
\be \label{ord_disord_corr} 
\langle \prod_{j=1}^n \tau_{\ell_j} \rangle_{\G,K}  \ := \  \frac{Z_{\G, T_{\ell_1^*} \circ \dots \circ T_{\ell_n^*} K} }{Z_{\G, K} }\,.  
\ee 
\end{enumerate}
\end{definition} 

As an expression of the above mentioned gauge symmetry, the expectation value $ \langle \prod_{j=1}^N \tau_{\ell_j} \rangle_K $ is a homotopy invariant  under deformations of  any $\ell_j$ in the plane which preserve the line's  endpoints.  More precisely, as a simple consequence of~\eqref{eq:Z2gauge} we have: 

\begin{proposition} [Homotopy invariance]
For any finite planar graph $\G=(\mathcal{V},\mathcal{E})$, edge weights $ K \,: \mathcal E \mapsto \C$ and lines  $ \ell_j $, $ j \in \{1,\dots, n\} $, as in Definition~\ref{def:disorder}, under deformations of  each $\ell_j$ in the plane which preserve the line's  endpoints the expectation value functional $ (\ell_1,\dots ,\ell_N) \mapsto  \langle \prod_{j=1}^n \tau_{\ell_j} \rangle_{\G,K} $  is multiplied by $(-1)$ each time one deformed line is moved over a site of the planar graph. 
\end{proposition}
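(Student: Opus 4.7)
The plan is to reduce the proposition to the $Z_2$ gauge identity \eqref{eq:Z2gauge} by tracking how the edge set $\ell_j^\ast$ changes under a generic homotopy of $\ell_j$. Since every $T_E$ is a diagonal sign-flip on $\mathbb{C}^{\mathcal E}$ and the operators commute with the composition law $T_E\circ T_F = T_{E\triangle F}$, the numerator in \eqref{ord_disord_corr} can be rewritten as $Z_{\G, T_{E_{\rm tot}}K}$, where $E_{\rm tot} := \ell_1^\ast \triangle \cdots \triangle \ell_n^\ast$ depends only on the parities of the total number of crossings of $E_{\rm tot}$'s edges by the union of the $\ell_j$'s. Therefore the statement will follow once I know how $E_{\rm tot}$ is modified when a single $\ell_j$ undergoes a homotopy.

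Next I would argue that a generic smooth homotopy between two site-avoiding lines with fixed endpoints decomposes into finitely many elementary moves of two types: (a) the line sweeps across an interior point of some edge $e\in\mathcal{E}$ that is not a vertex, and (b) the line sweeps across a vertex $v\in\mathcal{V}$. In move (a), two transversal intersections with $e$ are either created or destroyed simultaneously, so the parity of the number of crossings with $e$, and hence $\ell_j^\ast$, is unchanged. In move (b), the portion of $\ell_j$ near $v$ is pushed from one ``side'' of $v$ to the opposite side, and this toggles the parity of the crossings of $\ell_j$ with each edge incident to $v$; consequently $\ell_j^\ast$ is replaced by $\ell_j^\ast \triangle \partial\{v\}$ and hence $E_{\rm tot}$ is replaced by $E_{\rm tot}\triangle \partial\{v\}$.

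Finally, I would feed this into the gauge identity. In case (a) the composed operator is unaltered and so is the expectation value. In case (b), setting $K' := T_{E_{\rm tot}} K$ and using that for the single-site set $B = \{v\}$ one has $T_{\partial\{v\}} = T_{E_{\rm tot}\triangle \partial\{v\}}\circ T_{E_{\rm tot}}$, the new numerator equals
\begin{equation}
Z_{\G, T_{E_{\rm tot} \triangle \partial\{v\}} K} \ = \ Z_{\G, T_{\partial\{v\}} K'} \ = \ (-1)^{|\{v\}|}\, Z_{\G, K'} \ = \ -\, Z_{\G, T_{E_{\rm tot}} K}
\end{equation}
by the special case $|B|=1$ of \eqref{eq:Z2gauge}, after which division by the unchanged denominator $Z_{\G,K}$ yields the claimed sign. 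Iterating over the elementary moves in the decomposition of the homotopy then proves the proposition.

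The only point requiring care is the reduction to the two elementary moves, i.e.\ the statement that a homotopy between site-avoiding lines can be perturbed so that at each instant it is transversal both to the vertices (a codimension-$2$ condition, generically avoided) and to the edges (a codimension-$1$ condition met at isolated times, each of which is either a move of type (a) or of type (b)). This is a standard transversality / general-position argument, and is the only genuinely topological input; the rest of the proof is algebraic and combinatorial via the gauge identity.
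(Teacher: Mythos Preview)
Your argument is correct and follows precisely the route the paper indicates: the proposition is stated there as ``a simple consequence of~\eqref{eq:Z2gauge}'' with no further details, and what you have written is exactly the natural way to cash that out --- reduce to the single-vertex gauge flip $T_{\partial\{v\}}$ via a decomposition of the homotopy into elementary edge-tangency and vertex-crossing moves. There is no alternative proof in the paper to compare against; you have simply supplied the details the authors left implicit.
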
 

%
%

 \begin{figure}[h]
\begin{center}
\includegraphics [width = 1.1\textwidth]{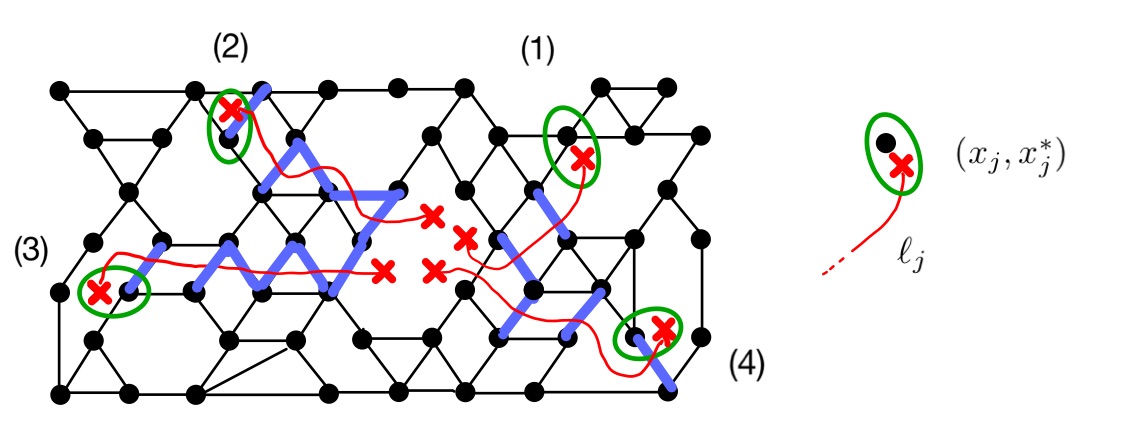} 
\caption{Order-disorder variables for a planar graph.  
Each of the ovals in the figure encircles a pair consisting of a site  $x_j\in \G$ and a point, marked $\times$,  within an adjacent cell of the dual graph  $x_j^*\in \G^*$.   
The disorder variables $\tau_{\ell_j}$ are associated with  lines $\ell_j $, each linking the corresponding $\times$ marked sites with a point in the {\it grand central} cell  $x_0^* $.   
%
The disorder lines $ \ell_1, \ell_2, \dots $ are enumerated cyclicly in the order of the lines' emergence from the grand central $x_0^* $.    The correlation function associated with such an array is defined in \eqref{ord_disord_corr}\label{fig:ord_disord}   
}
\end{center}
\end{figure}

The above construction parallels the definition of disorder operators for the Ising model~\cite{Kad_Ceva}.   Disorder lines for the dimer-monomer model appear also in the recent discussion of the dimer model's partition function in terms of Grassmann integrals~\cite{AllF14}.  

 \section{Pfaffian structure of the correlation functions of the order-disorder operators}

 Our main concern in this paper will be canonical pairs of order-disorder variables, cf.~Figure~\ref{fig:ord_disord}. 

 \begin{definition}
For a planar graph $\G=(\mathcal{V},\mathcal{E})$ with a set of edge weights $ K \,: \mathcal E \mapsto \C$,  open-ended, site-avoiding, non-intersecting lines $ \ell_1, \dots , \ell_{2n} $  in the plane in which $\G$ is embedded, together with disjoint sites $ x_1, \dots , x_{2n} \subset \mathcal{V} $   are called a collection of \emph{canonical pairs of order-disorder variables} in case:
\begin{enumerate}[i.]
\item all lines have a common end-point $ x_0^* \in \G^* $, called 
the \emph{grand central}, and 
\item the other endpoint of $ \ell_j $ is a face $ x_j^* \in \G^* $ adjacent to $ x_j $ for all $ j \in \{ 1, \dots , 2n \} $. 
\end{enumerate}
We call the canonical pairs of order-disorder variables \emph{cyclicly ordered} if they are labeled relative to their intersections with  the edge boundary of $ x_0^*$. 

The  {\it expectation values} of products of order-disorder variables operators $ \mu_j := \eta_{x_j} \tau_{\ell_j} $ are defined as 
\be 
\langle \prod_{j=1}^{2n} \mu_j \rangle_{\G,K}  \ := \  \frac{Z_{\G, T_{\ell_1^*} \circ \dots \circ T_{\ell_{2n}^*} K}\left(\{ x_1,\dots,x_{2n}\}\right) }{Z_{\G, K} }\, .
\ee 
 \end{definition}

 Our main new result is:

\begin{theorem}[Pfaffian correlations] \label{thm:Pf_OD}  
For a finite planar graph $\G=(\mathcal{V},\mathcal{E})$ with edge weights $ K \,: \mathcal E \mapsto \C$, for any collection of canonical pairs of  
order-disorder variables $p_j= (x_j,\ell_j) $,  $ j \in \{ 1, \dots , 2n\} $, ordered cyclicly relative to the grand central 
\be  \label{eq:Pf_gen} 
 \langle  \prod_{j=1}^{2n} \mu_j  \rangle_{\G,K} 
\ = \  \sum_{ \pi \in \Pi_n} \sgn(\pi) \, \prod_{j=1}^n  
\langle \mu_{ \pi(2j-1)} \,  \mu_{ \pi(2j)}  \rangle_{\G,K}   \ \equiv \ \Pf_n\left( \langle \mu_j \mu_k   \rangle_{\G,K}  \right)  \, .   
\ee
\end{theorem}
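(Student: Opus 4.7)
The plan is to closely parallel the proof of Theorem \ref{thm:boundary}, with the order-disorder correlation $\langle\prod_{j=1}^{2n}\mu_j\rangle_{\G,K}$ taking the role of $S_{2n}$ and with the topological input from planarity now supplied by the common grand-central configuration of the disorder lines. By the same inductive characterization of Pfaffians used there, it suffices to establish the recursion
\begin{equation*}
\langle\prod_{j=1}^{2n}\mu_j\rangle_{\G,K}\ =\ \sum_{k=2}^{2n}(-1)^k\,\langle\mu_1\mu_k\rangle_{\G,K}\,\langle\mu_2\cdots\widehat{\mu_k}\cdots\mu_{2n}\rangle_{\G,K}\,.
\end{equation*}

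After multiplication by $Z_{\G,K}^2$, each summand on the right unfolds, via the double-dimer representation of Section \ref{Sec:DD}, into a sum over $\omega^{(2)}\in\Omega^{(2)}(\{x_1,x_k\},\{x_j:j\neq 1,k\})$ weighted by the disorder sign $(-1)^{|\omega_1\cap(\ell_1^*\cup\ell_k^*)|+|\omega_2\cap\bigcup_{j\neq 1,k}\ell_j^*|}$. Mimicking \eqref{eq:Qexpand}, I split these configurations according to the paths of $\Gamma(\omega^{(2)})$ incident to the distinguished monomers: either (a) a single path $\gamma^{(1,k)}$ links $x_1$ directly to $x_k$, or (b) two disjoint paths link $x_1$ to some $x_m$ and $x_k$ to some $x_l$. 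In case (b), applying the switching principle of Lemma \ref{lem:switch} together with the transposition $k\leftrightarrow l$, and tracking how the disorder signs transform under recoloring the two paths, will produce an antisymmetric contribution whose $(k,l)$-sum vanishes. In case (a), switching along $\gamma^{(1,k)}$ shifts the monomer pair $\{x_1,x_k\}$ onto the second copy, converting each term into a summand of $Z_{\G,TK}(M)\cdot Z_{\G,K}=\langle\prod_{j=1}^{2n}\mu_j\rangle_{\G,K}\cdot Z_{\G,K}^2$, decorated by an extra sign $(-1)^{|\gamma^{(1,k)}\cap\bigcup_j\ell_j^*|}$ inherited from the reversed edge coloring along $\gamma^{(1,k)}$.

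The main obstacle, and the point at which planarity and cyclic ordering intervene essentially, is to show that this disorder sign in case (a) exactly compensates the combinatorial weight $(-1)^k$, and similarly to confirm the antisymmetry in case (b). Here I would exploit the homotopy invariance of the order-disorder correlator to deform the disorder lines freely, choosing representatives for which $\ell_1\cup\gamma^{(1,k)}\cup\ell_k$ bounds a simple region in the plane whose interior contains, by the cyclic ordering of the $\mu_j$ about $x_0^*$, precisely the dual faces $x_m^*$ with $1<m<k$. Since each of the remaining disorder lines $\ell_m$ runs from the outside point $x_0^*$ to $x_m^*$ without crossing the other $\ell_j$, its intersection with $\gamma^{(1,k)}$ has odd parity exactly when $1<m<k$, so that the net disorder sign is $(-1)^{k-2}=(-1)^k$ and cancels the expansion sign. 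A parallel enclosure count, comparing the regions bounded by $\gamma^{(1,m)}\cup\gamma^{(k,l)}$ before and after the $k\leftrightarrow l$ swap, supplies the antisymmetry in case (b). A small additional argument, based on local deformations of $\ell_1,\ell_k$ within the faces $x_1^*,x_k^*$, confirms that their own crossings with $\gamma^{(1,k)}$ contribute trivially to the above parity count, completing the verification of the recursion and hence the Pfaffian identity \eqref{eq:Pf_gen}.
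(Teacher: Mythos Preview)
Your overall architecture is exactly that of the paper: reduce the Pfaffian identity to the row-expansion recursion, expand each summand as a double-dimer sum, split according to whether $\connect{x_1}{x_k}$ or not, and then switch. The gap is in the sign bookkeeping under the switch. In case~(a) you assert that the extra sign after recoloring along $\gamma^{(1,k)}$ is $(-1)^{(\gamma^{(1,k)}\,|\,\mathcal L)}$. But the switching does more than recolor the edges of $\gamma^{(1,k)}$: it also moves the disorder lines $\ell_1,\ell_k$ from the first copy (where they acted on $\omega_1$) to the second (where they must now act on $\omega_2'$). Every edge of $\omega^{(2)}$ that lies \emph{outside} $\gamma^{(1,k)}$ keeps its color under the switch, yet its contribution to the disorder sign changes whenever it crosses $\ell_1$ or $\ell_k$. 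The correct ratio (this is the content of the paper's Lemma~\ref{pr:switch2}) is
\[
(-1)^{(\gamma^{(1,k)}\,|\,\mathcal L)}\,(-1)^{(\omega^{(2)}\,|\,\ell_{1,k})}\,,
\]
where the second factor records the crossings of $\ell_1\cup\ell_k$ with \emph{all} simple loops and open paths of $\Gamma(\omega^{(2)})$, not just with $\gamma^{(1,k)}$. Your enclosure argument---tracking which $x_m^*$ lie inside the region bounded by $\ell_1\cup\gamma^{(1,k)}\cup\ell_k$---addresses only the first factor and is silent on the second. The same omission propagates to case~(b), so the claimed antisymmetry in $k\leftrightarrow l$ is not established either.

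The paper disposes of both factors at once via a different topological device (the ``Intersection parities'' lemma): each open path $\gamma\in\Gamma(\omega^{(2)})$ is concatenated with its two associated disorder lines, short arcs from $x_j$ to $x_j^*$, and a connecting arc inside the grand central, to form a \emph{closed} planar loop; the remaining disorder lines are paired off consecutively at $x_0^*$. Since any two closed planar curves with transversal crossings meet an even number of times, the loop $\sigma^{(1,k)}$ built from $\gamma^{(1,k)},\ell_1,\ell_k$ has even total intersection with the union $\Sigma^{(1,k)}$ of all the other loops. Decomposing that even count into (i) the crossings inside the grand central, which by cyclic ordering give $(-1)^{k-2}$, (ii) the crossings of $\gamma^{(1,k)}$ with $\mathcal L\setminus\ell_{1,k}$, and (iii) the crossings of $\ell_{1,k}$ with $\omega^{(2)}\setminus\gamma^{(1,k)}$, yields precisely $(-1)^{(\gamma^{(1,k)}|\mathcal L)}(-1)^{(\omega^{(2)}|\ell_{1,k})}=(-1)^k$. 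It is step~(iii)---the interaction of $\ell_1,\ell_k$ with the \emph{other} paths and loops of $\omega^{(2)}$---that your argument does not capture.
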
 

This result includes Theorem~\ref{thm:boundary}  as a special case.   To see that, let us first note that 
for sites $x_j$ which lie along the boundary of the grand-central $x_0^*$, the corresponding disorder sites may be chosen as $x_j^* = x_0^*$.   When the lines $\ell_j$  do not cross any edge, as in this case,  the operators $\tau_{\ell_j} $ act as identity and may be omitted.    Theorem~\ref{thm:boundary}  then emerges through the inverted picture of the plane in which the complement of the finite graph is viewed as a single cell (of potentially large boundary).\\

In case the monomers  $ \{ x_{2j-1},x_{2j} \}$ are pairwise adjacent, 
the disorder lines may be chosen so that their actions are pairwise equivalent, and thus cancel each other.  In that case the pairwise product of two order-disorder variables reduces to a  an ordinary product of monomers, i.e., a dimer $ 
\mu_{2j-1} \mu_{2j}  =  \eta_{x_{2j-1}} \eta_{x_{2j}}$, so that  
\be
 \big\langle  \prod_{j=1}^{2n} \tau_j  \big\rangle_{\G, K}  =  \big\langle  \prod_{j=1}^{n}\eta_{x_{2j-1}} \eta_{x_{2j}} \big\rangle_{\G, K} \, .
\ee
 \\

The proof of Theorem~\ref{thm:Pf_OD} is organized along the lines used to establish the boundary case, Theorem~\ref{thm:boundary}.   However, the relevant topological considerations are considerably more intricate.   Defining, in analogy with $ Q_{2n} $ of~\eqref{def:Q},   
\be 
R_{2n}(p_1,...p_{2n}) \ := \ \sum_{k=2}^{2n} (-1)^k\,  \langle  \mu_1 \mu_k  \rangle_{\G,K} \  \ 
 \langle  \prod_{j\in \{\cancel 1, 2,..,\cancel k,..,2 n\}} \mu_j  \rangle_{\G,K}  \, ,
\ee
(with $ p_j := (x_j , \ell_j) $ standing for an order-disorder variables)
 the Pfaffian structure will be shown by proving  that for each $n$ and choice of order-disorder pairs:
\be 
R_{2n}(p_1,...p_{2n})  \ = \ \langle  \prod_{j\in \{ 1,..,  2 n\}} \mu_j  \rangle_{\G,K}  \, . 
\ee 
At specified $k$ the product of the order-disorder correlators is given by: 
\begin{align} \label{R1} 
&\langle  \mu_1 \mu_k  \rangle_{\G,K} \  \ 
 \langle  \prod_{j\in \{\cancel 1, 2,..,\cancel k,..,2 n\}} \mu_j  \rangle_{\G,K}  \ 
 \times  Z_{\G,K}^2  \ = \  \\ 
& \quad \sum_{\omega^{(2)} \in \Omega^{(2)}(\{x_1,x_k\}, (\{\cancel {x_1}, x_2,..,\cancel {x_k},.., x_{2 n}\}}  \chi_K(\omega_1) \, \chi_K(\omega_2) \, (-1) ^{(\omega_1 |\ell_{1,k})}  (-1) ^{(\omega_2 | \mathcal L \backslash  \ell_{1,k})} \notag
\end{align} 
where $(\omega_1 |\ell_{1,k}) $ denotes the number of intersections of the edges of $ \omega_1 $ with two disorder lines $ \ell_{1,k} := \{ \ell_1, \ell_k \} $ and likewise  $(\omega_2 | \mathcal L) $ denotes the number of intersections of the edges of $ \omega_2 $ with the collection of all disorder lines $ \mathcal L := \{ \ell_1, \dots , \ell_{2n} \} $.  

The terms in the above sum can be split into two classes, according to whether the loop / path configuration $\Gamma(\omega^{(2)})$ includes a path with $\partial \gamma^{(1,k)} = \{x_1,x_k\}$, or not.   The corresponding partial sums will be studied through the following quantities:
\begin{multline}  \label{def:W}
W^{(2)}_{\G, K}(\{M_1,\mathcal{L}_1\} ,\{ M_2,\mathcal{L}_2\}  ; C ) :=  \\
 \sum_{{\omega^{(2)} \in \Omega^{(2)}(M_1, M_2)}} \1\left[ \mbox{$ \omega^{(2)} $ satisfies $ C $}\right]  \, \Chi_{K} (\omega_1) \, (-1)^{(\omega_1 \, | \, \mathcal{L}_1)}\,  \Chi_{K}(\omega_2) \, \, (-1)^{(\omega_2 \, | \, \mathcal{L}_2)}
 \end{multline}
in which we specify a set of connections $ C $ of the involved monomer sets $M_1, M_2$. \\

A key result here is the corresponding version of the switching lemma: 
\begin{lemma}[Switching principle II]\label{pr:switch2}
For planar graphs, and the setup of Theorem~\ref{thm:Pf_OD},  we have for any $ m \neq k \neq l \neq m $:
\begin{align}
 &  W^{(2)}_{\G, K}(\{p_1, p_k \} ,\{ \cancel{p_1}, p_2 , \dots , \cancel{p_k} , \dots ,  p_{2n}  \}; \connect{x_1}{x_k} )    \notag  \\
 & \quad = (-1)^k
  \, W^{(2)}_{\G, K}(\emptyset ,\{ p_1,\dots , p_{2n}  \}; \connect{x_1}{x_k} )   \label{R1}
 \\[2ex] 
   &  W^{(2)}_{\G, K}\left( \{p_1, p_k \} ,\{ \cancel{p_1}, p_2 , \dots , \cancel{p_k} , \dots ,  p_{2n}  \}; { \connect{\ x_1}{x_m}\atop\connect{x_k}{x_l} } \right) \notag  \\
 & \quad =  (-1)^{k-l-1}  \,    W^{(2)}_{\G, K}\left( \{p_1, p_l \} ,\{  \cancel{p_1}, p_2,... , \dots , \cancel{p_l} , 
 \dots ,  p_{2n}  \}; { \connect{\ x_1}{x_m}\atop\connect{x_k}{x_l} } \right) 
 \label{R2}
\end{align}
\end{lemma}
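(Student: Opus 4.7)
The plan is to prove both identities via a sign-tracked version of the path-flip bijection of Lemma~\ref{lem:switch}, combined with a planarity argument that identifies the resulting configuration-dependent sign with the claimed constant.

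For~\eqref{R1}, I would apply the bijection $(\omega_1,\omega_2)\mapsto(\omega_1\triangle\gamma,\omega_2\triangle\gamma)$ with $\gamma:=\gamma^{(x_1,x_k)}\in\Gamma(\omega^{(2)})$. It preserves the product $\chi_K(\omega_1)\chi_K(\omega_2)$ and the condition $\connect{x_1}{x_k}$, and moves $x_1,x_k$ together with $\ell_1,\ell_k$ from the first slot of $W^{(2)}$ to the second. Using the mod-$2$ identity $(\omega\triangle\gamma\,|\,\mathcal{L})\equiv(\omega\,|\,\mathcal{L})+(\gamma\,|\,\mathcal{L})$, the ratio of the LHS to RHS integrands reduces to $(-1)^{(\omega^{(2)}|\ell_1+\ell_k)+(\gamma|\mathcal{L})}$. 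For~\eqref{R2} the analogous bijection flips along $\gamma^{(x_k,x_l)}$ and produces the ratio $(-1)^{(\omega^{(2)}|\ell_k+\ell_l)+(\gamma|\mathcal{L})}$.

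It then remains to match these configuration-dependent exponents with $k$, respectively $k-l-1$, modulo $2$, on every contributing $\omega^{(2)}$. Planarity is used in two ways. First, the simple components of $\Gamma(\omega^{(2)})$ are vertex-disjoint (every vertex carries $\omega^{(2)}$-degree at most two), so on the planar graph $\G$ the pairing of monomers by paths must be non-crossing in the cyclic arrangement around the grand central $x_0^*$; configurations whose pairing crosses the chord $\{x_1,x_k\}$ (resp.\ $\{x_k,x_l\}$) are absent from both sides, and the identity holds trivially on them. Second, on the admissible non-crossing configurations the mod-$2$ intersections $(\gamma\,|\,\mathcal{L})$ and $(\lambda\,|\,\ell_i+\ell_j)$ are homotopy invariants in the plane punctured at the disorder-line endpoints: a canonical short arc from $x_i$ to $x_j$ crosses exactly those $\ell_m$ with $m$ strictly between $i$ and $j$ in the cyclic order, while the non-crossing compatibility of any simple loop $\lambda$ with the path system forces $\{x_i^*,x_j^*\}$ to lie either jointly inside or jointly outside $\lambda$, so that $(\lambda\,|\,\ell_i+\ell_j)\equiv 0\pmod 2$. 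Collecting these contributions and performing the cyclic-label count identifies the exponent for~\eqref{R1} with $k-2\equiv k\pmod 2$, and the analogous count yields $k-l-1$ for~\eqref{R2}.

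The hardest step will be the loop-and-path book-keeping on each admissible configuration: verifying that the contributions of the other simple paths in the decomposition and of every planarity-compatible simple loop to the quantities $(\omega^{(2)}|\ell_1+\ell_k)$ and $(\omega^{(2)}|\ell_k+\ell_l)$ cancel modulo $2$, leaving only the deterministic topological count produced by $\gamma$. Once this topological identification is settled, both identities follow directly from the sign-tracked bijection.
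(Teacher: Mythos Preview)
Your sign-tracked bijection and the resulting ratio $(-1)^{(\omega^{(2)}|\ell_{1,k})+(\gamma^{(1,k)}|\mathcal L)}$ (and its analogue for~\eqref{R2}) are correct and coincide exactly with the paper's computation. The gap is in your topological step.

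The claim that ``the pairing of monomers by paths must be non-crossing in the cyclic arrangement around the grand central $x_0^*$'' is false for bulk monomers. That statement is what makes the \emph{boundary} case (Theorem~\ref{thm:boundary}) work: there the monomers sit on a common face boundary and vertex-disjoint simple paths in a planar graph indeed induce a non-crossing pairing of boundary points. In the order--disorder setting the monomers are in the bulk; the cyclic order is imposed only by the disorder lines at $x_0^*$, and graph edges are allowed to cross those lines. One can easily place four bulk monomers and straight disorder lines so that the paths pair them ``diagonally'' (a crossing pairing in the cyclic order) while remaining vertex-disjoint in the graph --- one path simply detours around the other. So configurations with crossing pairings are \emph{not} absent, and your subsequent ``canonical short arc'' count, which presupposes non-crossing, does not apply. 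Relatedly, for an open path $\gamma'$ with $\partial\gamma'=\{x_a,x_b\}$ disjoint from $\{x_k,x_l\}$, the parity $(\gamma'\,|\,\ell_k+\ell_l)$ need not vanish, since $x_a$ and $x_b$ may lie on opposite sides of the closed curve $\sigma^{(k,l)}$ formed by $\gamma^{(k,l)}\cup\ell_k\cup\ell_l$; your loop argument does not extend to these paths.

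The paper handles this by a different topological device. It proves a separate \emph{Intersection Parities} identity
\[
(-1)^{(\gamma^{(k,l)}|\mathcal L)}\,(-1)^{(\omega^{(2)}|\ell_{k,l})}=(-1)^{k-l-1}
\]
valid for \emph{every} admissible $\omega^{(2)}$, with no non-crossing hypothesis. The trick is to close \emph{all} open paths of $\Gamma(\omega^{(2)})$ into loops by concatenating each $\gamma$ with its two associated disorder lines (via short arcs $x_j\!-\!x_j^*$), and then pairing the $2n$ disorder-line endpoints inside $x_0^*$ so that $\ell_k$ joins $\ell_l$ and the remaining lines are matched consecutively in the cyclic order. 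The resulting closed curve $\sigma^{(k,l)}\supset\gamma^{(k,l)}$ must intersect the union $\Sigma^{(k,l)}$ of the other closed curves an even number of times; the crossings inside $x_0^*$ contribute exactly $(-1)^{k-l-1}$, and the remaining crossings reproduce the left-hand side above. This argument treats the other open paths and the simple loops of $\Gamma(\omega^{(2)})$ on the same footing and requires no case analysis of the pairing pattern.
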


\begin{proof}  
 The relation \eqref{R1}, which involves terms for which $\connect{x_1}{x_k}$, 
will be established through the switching transformation:  
\be
 (\omega_1,\omega_2) \ \mapsto \  (\omega_1\Delta \gamma^{(1,k)} ,\omega_2\Delta \gamma^{(1,k)} ) 
 \ee 
Expanding  the  quantities $W^{(2)}$ (defined in \eqref{def:W}), which appear in \eqref{R1}, into sums over $\omega^{(2)} $, 
 the ratio of the corresponding terms  is
\begin{multline} \label{Z2}
\frac{\chi_K(\omega_1\Delta \gamma^{(1,k)} ) \ \  \chi_K(\omega_2\Delta \gamma^{(1,k)} ) }{\chi_K(\omega_1) \quad  \chi_K(\omega_2) } \, 
\frac{  (-1) ^{(\omega_2\Delta \gamma^{(1,k)}   | \mathcal L )} 
}
{ (-1) ^{(\omega_1 |\ell_{1,k})}  (-1) ^{(\omega_2 | \mathcal L \backslash  \ell_{1,k})} 
 }   \\
 \ = \  (-1)^{(\gamma^{(1,k)} |\mathcal L)} \ \ (-1)^{(\omega^{(2)} | \ell_{1,k})}\, , 
\end{multline} 
where the last step is by an elementary calculation in $Z_2$.   The  relation \eqref{R1} then follows from the special case $ l = 1 $ through the  lemma which is stated next. (This is  where the model's planarity plays a role.) \\ 
  
 The relation  \eqref{R2} concerns terms $\omega^{(2)}$ for which   $\connect{x_k}{x_l}$ for some $l\neq 1$.  For that  we employ the switching transformation
\be
 (\omega_1,\omega_2) \ \mapsto \  (\omega_1  \Delta \gamma^{(k,l)} , \omega_2 \Delta \gamma^{(k,l)}) \,. 
 \ee 
By a calculation similar to \eqref{Z2}, the ratio of the corresponding contributions to the  sums which yield the two quantities $W^{(2)}$  in \eqref{R2} is: 
\begin{align}
\frac{(-1)^{(\omega_1 \Delta \gamma^{(k,l)} | \ell_{1,l} )} \, (-1)^{(\omega_2  \Delta \gamma^{(k,l)} |  \mathcal{L} \backslash \ell_{1,l} )}  }{(-1)^{(\omega_1 | \ell_{1,k} )} \, (-1)^{(\omega_2  |  \mathcal{L} \backslash \ell_{1,k} )} } \ & = \ \frac{ (-1)^{(\gamma^{(k,l)} |\mathcal L)} \, (-1)^{(\omega^{(2)} | \ell_{1,l} )}}{(-1)^{(\omega^{(2)} | \ell_{1,k} )}} \notag \\
& = \ (-1)^{(\gamma^{(k,l)} |\mathcal L)} \, (-1)^{(\omega^{(2)} | \ell_{k,l} )} \, . 
\end{align}
The  relation \eqref{R2} then again follows from the next lemma. 
\end{proof} 

The topological statement which was quoted within the above proof is: 

\begin{lemma}[Intersection parities] In  the planar graph setup of  Proposition~\ref{pr:switch2}, for any
$\omega^{(2)}$ such that $\connect{x_k}{x_l}$ with respect to the corresponding loop / path configuration  $\Gamma(\omega^{(2)})$:
\be  \label{RS1}
  (-1)^{(\gamma^{(k,l)} |\mathcal L)} \ \ (-1)^{(\omega^{(2)} | \ell_{k,l})} \ = \  (-1)^{k-l-1} \, . 
\ee
\end{lemma}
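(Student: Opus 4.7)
The plan is to encode both intersection counts into crossings with a single auxiliary closed curve, and then to extract the answer $k-l-1$ from the cyclic ordering of the disorder lines at $x_0^*$. Concretely, I would concatenate $\ell_k$, a short arc $\alpha_k$ inside the face $x_k^*$ from $x_k^*$ to $x_k$, the path $\gamma^{(k,l)}$, a short arc $\alpha_l$ from $x_l$ back to $x_l^*$, and $\ell_l^{-1}$. This produces a closed $1$-chain $C$ in the plane. Since $C$ may fail to be a Jordan curve when $\ell_k$ or $\ell_l$ crosses $\gamma^{(k,l)}$, I would rely on the $\mathbb{Z}/2$ winding function $w_C: \mathbb{R}^2 \setminus C \to \{0,1\}$: every auxiliary curve $\alpha$ with endpoints $p,q \notin C$ satisfies $(\alpha \,|\, C) \equiv w_C(p)+w_C(q) \pmod 2$.

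For the factor $(\omega^{(2)} \,|\, \ell_{k,l})$, doubly covered edges cancel modulo $2$ and the sum reduces to contributions from the components of $\Gamma(\omega^{(2)})$. Each $\gamma' \ne \gamma^{(k,l)}$ in $\Gamma$ is vertex- and edge-disjoint from $\gamma^{(k,l)}$, and the small arcs can be arranged to miss $\gamma'$, so by planarity $(\gamma' \,|\, C) = (\gamma'\,|\,\ell_k)+(\gamma' \,|\, \ell_l)$. A closed $\gamma'$ then contributes zero (two closed curves in $\mathbb{R}^2$ intersect evenly), while an open path $\gamma^{(a,b)}$ with $\{a,b\}\cap\{k,l\}=\emptyset$ contributes $w_C(x_a)+w_C(x_b)$. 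Summed over the pairing of $\{1,\ldots,2n\}\setminus\{k,l\}$ (which is forced by the definition of $\Gamma$ once $\gamma^{(k,l)}$ is singled out) this gives $\sum_{i\neq k,l}w_C(x_i) \pmod 2$. The self-contribution $(\gamma^{(k,l)} \,|\, \ell_k)+(\gamma^{(k,l)} \,|\, \ell_l)$ is then doubled by the $j\in\{k,l\}$ summands of $(\gamma^{(k,l)}\,|\,\mathcal{L})$ and so cancels modulo $2$.

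For the remaining $j \ne k,l$, pairwise non-intersection of the disorder lines together with a small homotopy of $\ell_j$ avoiding the arcs gives $(\gamma^{(k,l)}\,|\, \ell_j)=(\ell_j\,|\,C) \equiv \rho_j + w_C(x_j^*) \pmod 2$, where $\rho_j \in \{0,1\}$ records the side of $C$ into which $\ell_j$ enters upon leaving $x_0^*$. Since for $j\ne k,l$ the face $x_j^*$ differs from $x_k^*, x_l^*$ and the short arc linking $x_j^*$ to $x_j$ may be chosen inside $x_j^*$ and hence to miss $C$, one has $w_C(x_j^*)=w_C(x_j)$. Collecting all contributions, the $w_C(x_i)$'s match pairwise between the two sums and cancel, leaving $\sum_{j\ne k,l}\rho_j \pmod 2$. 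Finally, the cyclic order of the lines at $x_0^*$ partitions $\{1,\ldots,2n\}\setminus\{k,l\}$ into the two cyclic arcs between $\ell_k$ and $\ell_l$; one arc contains $l-k-1$ indices (mod $2n$) with $\rho_j=1$ and the other contains $2n-(l-k)-1$ with $\rho_j=0$. Since $2n$ is even, $\sum_{j\ne k,l}\rho_j \equiv l-k-1 \equiv k-l-1 \pmod 2$, as claimed.

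The main technical obstacle is that $C$ need not be simple, so the argument cannot rely on a clean Jordan decomposition; this is addressed throughout by working with the $\mathbb{Z}/2$ winding number, which is insensitive to self-intersections. A subsidiary issue is ensuring that the arcs $\alpha_k,\alpha_l$ and the homotopies of the $\ell_j$ can be simultaneously arranged so that every identification $(\gamma'\,|\,C)=(\gamma'\,|\,\ell_k)+(\gamma'\,|\,\ell_l)$ and each equality $w_C(x_j^*)=w_C(x_j)$ holds without spurious crossings; this is possible because all these arcs are localized near the mutually distinct faces $x_k^*,x_l^*,x_j^*$.
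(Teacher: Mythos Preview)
Your argument is correct and rests on the same geometric idea as the paper's proof: concatenate $\gamma^{(k,l)}$ with $\ell_k$, $\ell_l$, and short connecting arcs into a closed curve (your $C$, the paper's $\sigma^{(k,l)}$), and then exploit the fact that planar closed curves have even crossing parity.  The difference is organizational.  The paper closes \emph{all} open paths simultaneously: it joins every $x_j$ to $x_j^*$ and, inside the grand central $x_0^*$, pairs $\ell_k$ with $\ell_l$ and the remaining $\ell_j$ consecutively according to the cyclic order.  This yields, besides $\sigma^{(k,l)}$, a family $\Sigma^{(k,l)}$ of closed loops, and the single statement ``$(\sigma^{(k,l)}\mid\Sigma^{(k,l)})$ is even'' delivers \eqref{RS1} in one line, the factor $(-1)^{k-l-1}$ being precisely the parity of the crossings produced by the consecutive pairing inside $x_0^*$.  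You instead leave the other paths and disorder lines open and invoke the $\mathbb{Z}/2$ winding function $w_C$ to convert each open-curve intersection count into endpoint data $w_C(x_j)$, $w_C(x_j^*)$, and $\rho_j$; the endpoint terms cancel in pairs and the $\rho_j$'s read off $k-l-1$ from the same cyclic partition.  The paper's packaging is shorter and avoids your case-by-case bookkeeping, while your version makes more explicit why self-intersections of $C$ (from $\ell_k$ or $\ell_l$ crossing $\gamma^{(k,l)}$) are harmless.  Both proofs share the same mild technicality of arranging the short arcs near $x_j,x_j^*$ so that no spurious crossings with the disorder lines are introduced.
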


\begin{proof} 
To establish this relation it is useful to join the open ended paths $\gamma$  of $ \Gamma(\omega^{(2)} ) $  with the  disorder lines corresponding to the paths' edges into loops with only transversal crossing.   For this purpose, we employ the following construction. 
\begin{enumerate}[1.]
\item Join directly each $ x_j  $ with the endpoint~$ x_j^*$ of the corresponding disorder line~$ \ell_j $. 
\item Connect pairwise the other endpoints of the disorder lines within the grand central $ x_0^* $, so that $ \ell_k $ is connected to $ \ell_l $ and the remaining lines are paired consecutively with respect to the cyclic ordering. 
\end{enumerate}

Let $ \sigma^{(k,l)} $  be the loop which includes $ \gamma^{(k,l)} $ concatenated with $ \ell_k $ and $ \ell_l $ in the above construction, and let $ \Sigma^{(k,l)} $ stand for the collection of  the other loops which the construction yields.   Any two planar loops, simple or not,  with transversal crossings can  intersect only even number of times (as can be deduced  from the Jordan curve theorem).   Thus $ \sigma^{(k,l)} $ has an even intersection with $ \Sigma^{(k,l)} $.    The  intersections within the grand central cell contribute to this the factor $(-1)^{k-l-1}$, and the rest is the parity of  the intersections of 
$ \gamma^{(k,l)} $  and $\ell_{k,l}$ with the rest.  Hence: 
\begin{eqnarray} 
1  & = &    (-1)^{k-l-1} \, (-1)^{\left[(\gamma^{(k,l)} | \mathcal L) - (\gamma^{(k,l)} |  \ell_{k,l} ) \right]}  \, 
(-1) ^{\left[ (  \omega^{(2)}| \ell_{k,l} ) - (  \gamma^{(k,l)} | \ell_{k,l} ) \right]}  \notag  \\[1ex]  
& =&   (-1)^{k-l-1} \,   (-1)^{(\gamma^{(k,l)} |\mathcal L)} \ \ (-1)^{(\omega^{(2)} | \ell_{k,l})}\, ,  
\end{eqnarray}
as claimed in \eqref{RS1}. 
\end{proof}

We are now ready to complete the proof of Theorem~\ref{thm:Pf_OD}. 

\begin{proof}[Proof of Theorem~\ref{thm:Pf_OD}]
Similarly as in the Proof of Theorem~\ref{thm:boundary}, it remains to show that 
\be
\langle \prod_{j=1}^{2n} \mu_j \rangle_{\G,K}  \ =\  R_{2n}(p_1,...p_{2n})  \, . 
\ee
The right side times $ (Z_{\G,K})^2 $ may be rewritten as
\begin{align}
&  \sum_{k=2}^{2n} (-1)^k  \, W^{(2)}_{\G, K}(\{p_1, p_k \} ,\{ \cancel{p_1}, p_2 , \dots , \cancel{p_k} , \dots ,  p_{2n}  \}; \connect{x_1}{x_k} )  \notag \\
&\quad  +   \sum_{k=2}^{2n}  (-1)^k  \, \sum_{\substack{l,m=2 \\ k\neq l \neq m \neq k} }^{2n}   \,  W^{(2)}_{\G, K}\left( \{p_1, p_k \} ,\{ \cancel{p_1}, p_2 , \dots , \cancel{p_k} , \dots ,  p_{2n}  \}; { \connect{\ x_1}{x_m}\atop\connect{x_k}{x_l} } \right) \notag \\
& = \  \sum_{k=2}^{2n}   W^{(2)}_{\G, K}(\emptyset ,\{ p_1,\dots , p_{2n}  \}; \connect{x_1}{x_k} ) \, .  \label{eq:proofpf}
\end{align}
Here the last line results from the switching Lemma~\ref{pr:switch2}. More precisely, 
the second sum on the left vanishes thanks to  the antisymmetry in the $ k\neq l $ summation as is apparent from \eqref{R2}. Applying \eqref{R1} to the first sum on the left yields the sum on the right, which coincides with $ T_{2n}(p_1,\dots,p_{2n})  $. 
\end{proof}

\appendix 

\section{A path integral representation}\label{app:A}

The loop gas formulation of the double dimer model, which is presented in Section~\ref{Sec:DD},  is of help in relating it to a broad range of physics models, for which related techniques are of relevance.     To  highlight this picture, let us just state here the resulting  path integral representation (in a discrete sense)   of  the model's correlation function.

Lemma~\ref{lem:loops} allows  to classify the double-dimer cover configurations in terms of   the loop-gas configuration $\Gamma( \omega^{(2)})$.  Upon partial summation in~\eqref{eq:DDPF}  over the equivalence classes of configurations with common $\Gamma( \omega^{(2)})$ one gets 
\be \label{eq:Gamma1}
 Z_{\G,K}^{(2)}(M_1,M_2) 
\ = \  
 \sum_{{\Gamma \in \Omega_\G^{(L)}(M_1, M_2)}} 
  2^{n_s(\Gamma)}\, \prod_{\gamma \in \Gamma} \chi_K(\gamma) 
\ee 
where $\Omega_\G^{(L)}(M_1, M_2)$ is the collection of loop / path configurations which are consistent with the conditions listed in Lemma~\ref{lem:loops}, and $ \chi_K(\gamma)  = \prod_{b \in \gamma} K_b $ for each $\gamma \in \Gamma$.   Next, summing over the loops of $\Gamma$, while keeping fixed the configuration's  the open-ended paths, one obtains a path representation of the monomer correlation functions.

For  the monomer correlation function, which is defined in~\eqref{M_ratio}, this yields
\begin{eqnarray} \label{eq:two-pointCF}
 S_{2}(x_1,x_{2}) & =&   (Z_{\G,K})^{-2}    \sum_{{\Gamma \in \Omega_\G^{\text{L}}(\{x_1,x_2\}, \emptyset)}} 
  2^{n_s(\Gamma)}\, \prod_{\gamma \in \Gamma} \chi_K(\gamma) \, \notag  \\
 & =&  \sum_{\substack{\gamma  \in \Omega^{\rm A}_1 \\  \partial \gamma= \{x_1,x_2\}} } \chi_K(\gamma) \,  \left(\frac{Z_{\G,K}(\mathcal{V}(\gamma))}{Z_{\G,K}}\right)^2 \1\left[ \gamma \, \mbox{is odd} \right]\, , 
 \end{eqnarray} 
where $ \Omega_1^{\rm A} $ denotes the collection of simple paths on $\G$.

For a more general expression we use  $\Gamma_P$ to refer to  collections of non-intersecting simple paths on the graph $\G$, and denote by  $ \Omega_n^{\rm A} $ the set of such path collections of $n$ elements.   The set of vertices which are covered by paths in $\Gamma_P$ will be denoted by 
 $\mathcal{V}(\Gamma_P)$, 
and the collection of the paths' boundary points by
$\partial \Gamma_P =  \sqcup_{\gamma \in \Gamma_P} \partial \gamma  $.

In these terms,  \eqref{eq:Gamma1} yields the following  path representation.
 \begin{proposition}[Path integral for correlations]
For any finite graph $\G= (\mathcal{V} , \mathcal{E}) $ and disjoint sites $ \{ x_1, \dots , x_{2n} \} \subset \mathcal{V} $ the monomer correlation function admits the  representation
\be  \label{1st_sum}
S_{2n}(x_1, \dots , x_{2n})  \ = \    
\sum_{\substack{\Gamma_P = \{ \gamma_1, ..., \gamma_n\} \subset \Omega_n^{\rm A}  \\ 
\partial \Gamma_P = \{x_1,...,x_{2n}  \}}} 
   w_K(\Gamma_P) \, \prod_{\gamma \in \Gamma_P}  \1\left[ \gamma \,  \mbox{\rm is odd} \right] \, ,
\ee 
with  the weight function
\be
w_K(\Gamma_P) := \  \left(\frac{Z_{\G,K}( \mathcal{V}(\Gamma_P))}{Z_{\G,K}}\right)^2  
\prod_{\gamma \in \Gamma_P}  \chi_K(\gamma) \, . 
\ee
\end{proposition}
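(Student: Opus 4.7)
The plan is to derive the stated path representation directly from the double-dimer loop-gas identity \eqref{eq:Gamma1}. Writing $M := \{x_1,\ldots,x_{2n}\}$, first rewrite the defining ratio \eqref{M_ratio} as
\begin{equation}
S_{2n}(x_1,\ldots,x_{2n}) \;=\; \frac{Z_{\G,K}(M)\, Z_{\G,K}}{Z_{\G,K}^{\,2}} \;=\; \frac{Z^{(2)}_{\G,K}(M,\emptyset)}{Z_{\G,K}^{\,2}},
\end{equation}
which brings the double-dimer formalism of Section~\ref{Sec:DD} into play. Applying \eqref{eq:Gamma1} to the numerator then expresses it as a weighted sum over loop-path configurations $\Gamma \in \Omega_\G^{(L)}(M,\emptyset)$.

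Next, decompose each configuration as $\Gamma = \Gamma_L \sqcup \Gamma_P$, with $\Gamma_P$ collecting the open-ended simple paths and $\Gamma_L$ the (simple and double) loops. Since $M_2 = \emptyset$, Lemma~\ref{lem:loops} forces the endpoints of every path in $\Gamma_P$ to lie in $M_1 = M$, so each such path has odd length and the elements of $\Gamma_P$ pair up the $2n$ sites of $M$. Moreover, the degree identity in the proof of Lemma~\ref{lem:loops} shows that each vertex in $\mathcal{V}\setminus M$ carries total degree $2$ in $\omega^{(2)}$ and is therefore visited by exactly one $\gamma\in\Gamma$; hence the paths and loops are mutually vertex-disjoint and $\mathcal{V}(\Gamma_P)$ does not meet the support of $\Gamma_L$.

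Now group the sum in \eqref{eq:Gamma1} according to the path collection $\Gamma_P$. Peeling off the path weights $\prod_{\gamma\in\Gamma_P}\chi_K(\gamma)$ leaves a sum over loop configurations $\Gamma_L$ supported on the induced subgraph with vertex set $\mathcal{V}\setminus\mathcal{V}(\Gamma_P)$. Applying \eqref{eq:Gamma1} in the reverse direction --- now with empty monomer sets on this reduced graph --- identifies the inner sum with the corresponding monomer-free double-dimer partition function, which factorises as $Z_{\G,K}(\mathcal{V}(\Gamma_P))^{2}$ (since dimer covers of the induced subgraph are in natural bijection with dimer covers of $\G$ whose monomer set is $\mathcal{V}(\Gamma_P)$). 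Dividing by $Z_{\G,K}^{\,2}$ reconstitutes the weight $w_K(\Gamma_P)$ of \eqref{1st_sum}, while the odd-path indicator records the parity constraint of Lemma~\ref{lem:loops}.iii.

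The one delicate bookkeeping step is this reverse application of the loop-gas identity: one must confirm that the factor $2^{n_s(\Gamma_L)}$ in \eqref{eq:Gamma1} exactly enumerates the alternating $\omega_1/\omega_2$ colourings by which a given $\Gamma_L$ is realised as an ordered pair of dimer covers of the reduced subgraph. This, however, is precisely the equivalence-class count established at the end of Lemma~\ref{lem:loops}, so no genuinely new combinatorial input is required; the entire argument amounts to a substitution followed by a factorisation of the loop sum over the path-free complement.
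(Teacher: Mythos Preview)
Your argument is correct and follows exactly the route the paper indicates: you pass from $S_{2n}$ to $Z^{(2)}_{\G,K}(M,\emptyset)/Z_{\G,K}^{2}$, invoke the loop-gas identity~\eqref{eq:Gamma1}, split each $\Gamma$ into its open-path part $\Gamma_P$ and its loop part $\Gamma_L$, and then resum over $\Gamma_L$ on the depleted graph to recover $Z_{\G,K}(\mathcal{V}(\Gamma_P))^{2}$. The only points the paper leaves implicit --- vertex-disjointness of the loop and path parts and the identification of the inner loop sum via \eqref{eq:Gamma1} applied on the induced subgraph --- you spell out correctly.
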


\section*{Acknowledgements}  This work was supported in part by the NSF grant PHY-1305472.   We thank Hugo Duminil-Copin and Vincent Tassion for stimulating discussions of related topics, and Princeton University for hosting S.~Warzel as a PU Global Scholar.

 \end{document}